\RequirePackage[T1]{fontenc}
\RequirePackage{fix-cm}
\RequirePackage{mlmodern}
\documentclass[12pt,pdftex]{article}

\usepackage{amsmath, amssymb}
\usepackage{amsthm}
\usepackage{bm}
\usepackage{url}
\usepackage{float}
\usepackage{fancyhdr}
\usepackage{geometry}
\usepackage{graphicx}
\usepackage{booktabs}
\usepackage{xcolor}
\usepackage{ulem}

\newtheorem{theorem}{Theorem}

\newtheorem{lemma}[theorem]{Lemma}

\newtheorem{remark}{Remark}

\newcommand{\ket}[1]{|#1\rangle}

\title{Construction of Quantum Rank-Metric Codes\\Using Hermitian Orthogonality}
\author{
  Ryota Nizuka \qquad Ryutaroh Matsumoto \\
  \small \textit{Department of Information and Communications Engineering} \\
  \small \textit{Institute of Science Tokyo}\\
  \small Tokyo, 152-8550 Japan.
}

\begin{document}
\maketitle
\begin{abstract}
Stacked quantum memory is an architecture in which multiple layers of qubits are stacked. 
Quantum rank-metric codes are effective for error correction in stacked quantum memories. 
However, the previously proposed quantum Gabidulin codes based on the CSS construction had a problem: due to algebraic constraints, the applicable memory layouts were strictly limited to square shapes of odd length.

In this paper, we first propose a framework for constructing quantum rank-metric codes from classical linear codes with symplectic self-orthogonality. 
Building upon this, we propose a new construction method for quantum Gabidulin codes by combining the Hermitian self-orthogonality of classical Gabidulin codes—utilizing the self-dual basis that exists when the extension degree of the finite field is even—with the quantum code construction method using Hermitian orthogonality by Matsumoto and Uyematsu.

The proposed method succeeds in approximately doubling the ratio of the minimum rank distance to the number of physical qubits while maintaining the code rate. 
Furthermore, it eliminates the restriction of the conventional method that requires the number of cells and layers of the stacked memory to be odd, realizing the construction of quantum rank-metric codes applicable to memories with an even number of cells and layers. 
This construction improves the relative error correction capability of the stacked quantum memory architecture and increases the degree of freedom in design while preserving the code rate.
\end{abstract}
\section{Introduction}\label{sec:introduction}
  Recent years have seen rapid progress in research towards the realization of quantum computers \cite{preskill2018, arute2019}. However, qubits are highly susceptible to noise, making quantum error correction essential for performing large-scale and reliable computations \cite{shor1996}.
  Among these, Delfosse and Z\'{e}mor proposed a model of stacked quantum memory, inspired by multi-level flash memory in classical SSDs, as an architecture to integrate physical qubits and build a space-saving system capable of large-scale computation \cite{qgab}.
  This stacked quantum memory model has a structure in which multiple layers of qubits are stacked, and a group of qubits penetrating vertically through each layer is treated as a ``cell''.
  Errors and noise occurring during the execution of Clifford circuits are modeled as stacked errors.
  To efficiently correct such errors, Delfosse and Z\'{e}mor proposed to extend classical rank-metric codes, particularly Gabidulin codes \cite{gabidulin1985, delsarte1978}, to quantum error correction.
  Delfosse and Z\'{e}mor constructed quantum Gabidulin codes using the CSS (Calderbank-Shor-Steane) construction \cite{calderbank1996, steane1996} and proposed error correction on stacked quantum memory.

  The quantum Gabidulin codes based on the CSS construction by Delfosse and Z\'{e}mor,\ require the existence of a self-dual normal basis of $\mathbb{F}_{2^n}$ over $\mathbb{F}_2$ \cite{macwilliams1977}, and are constructed when the number of physical qubits per layer, $n$, is odd.
  In addition, to satisfy the orthogonality required for the CSS construction, the applicable stacked quantum memories were strictly limited to $n \times n$ square shapes.
  
  On the other hand, research on the algebraic properties of classical Gabidulin codes has been advancing. Islam and Horlemann clarified the conditions under which Gabidulin codes become self-orthogonal with respect to the Hermitian inner product by using a self-dual basis of $\mathbb{F}_{2^n}$ over $\mathbb{F}_2$ when $n$ is even \cite{islam2023}.
  Matsumoto and Uyematsu proposed a method for constructing quantum codes for $p^m$-state systems from classical linear codes \cite{matsumoto2000}.

  In this paper, we propose a construction method for quantum Gabidulin codes using Hermitian orthogonality as an approach different from the CSS construction proposed by Delfosse and Z\'{e}mor.
  Specifically, by combining the Hermitian self-orthogonality of Gabidulin codes shown by Islam and Horlemann,\ and the quantum code construction method by Matsumoto and Uyematsu, we construct quantum Gabidulin codes with preferable characteristics.
  In this construction, while maintaining a code rate approximately equivalent to that of the conventional method, the ratio of the minimum rank distance to the number of physical qubits can be improved by a factor of about two.
  Also, the condition that the number of cells and layers of the applicable stacked memory must be odd, which was restricted in the conventional CSS construction, is removed.

  The structure of this paper is as follows.
  In Section 2, we review the fundamental concepts and theoretical frameworks underlying this research, including the definition of qubits, the model of stacked quantum memory, and the properties of Gabidulin codes.
  In Section 3, as one of the contributions of this paper, we describe a framework for constructing quantum rank-metric codes from classical codes with symplectic self-orthogonality.
  In Section 4, we review the quantum code construction method by Matsumoto and Uyematsu, which is the core of our proposed method.
  In Section 5, we review the conventional construction method by Delfosse and Z\'{e}mor, and examine the properties of the resulting quantum codes.
  In Section 6, the main result of this study, we propose a new construction method for quantum rank-metric codes using Hermitian orthogonality, and present its parameter evaluation and construction example.
  In Section 7, we compare the performance of the proposed method with the conventional method and discuss the superiority of the proposed method.
  Finally, Section 8 provides the conclusion of this paper.
\section{Preliminaries}\label{sec:preliminaries}
  In this section, we organize the concepts and theoretical frameworks that form the basis of the subsequent discussions.
  \subsection{Qubits}
    The minimum unit of information in quantum computation is called a qubit.
    While a classical bit takes a deterministic state of 0 or 1, a single qubit takes a pure state represented as a unit vector on a complex 2-dimensional Hilbert space $\mathcal{H}_2$.
    Assuming the orthonormal basis of $\mathcal{H}_2$ is $\{\ket{0}, \ket{1}\}$, the state $\ket{\psi}$ of a qubit is described as
    $\ket{\psi} = c_0\ket{0} + c_1\ket{1}$ 
    ($c_0, c_1 \in \mathbb{C}, |c_0|^2 + |c_1|^2 = 1$).
    Furthermore, the state of a composite system consisting of $n$ qubits is represented as a vector in the $2^n$-dimensional tensor product space $\mathcal{H}_2^{\otimes n}$.

    Errors and interactions with the environment that occur in quantum systems are formulated by unitary matrices on a Hilbert space, or trace-preserving quantum operations \cite{nielsen2010}.
    However, in quantum error correction theory, it is known that if a finite set of specific unitary matrices can be successfully corrected, the state can be sufficiently restored even against arbitrary continuous quantum errors occurring in the communication channel \cite{nielsen2010}.
    To represent this finite set of errors, we introduce the Pauli group.
    The Pauli matrices on a single qubit are defined by the following four matrices:
    \begin{equation}
      I := \begin{pmatrix} 1 & 0 \\ 0 & 1 \end{pmatrix},
      \quad X := \begin{pmatrix} 0 & 1 \\ 1 & 0 \end{pmatrix},
      \quad Z := \begin{pmatrix} 1 & 0 \\ 0 & -1 \end{pmatrix},
      \quad Y:=iXZ.
    \end{equation}
    Here, $X$ is the bit flip operator and $Z$ is the phase flip operator.
    The Pauli group $\mathcal{P}_n$ on an $n$-qubit system is defined as a finite group formed by multiplying the $n$-fold tensor products of these operators by phase factors $\{\pm 1, \pm i\}$.
    It is sufficient for a quantum error correction circuit to correct errors belonging to this Pauli group, and we will conduct our discussions focusing on these errors.
  \subsection{Stacked Quantum Memory}
    In this section, we formulate the mathematical model of stacked quantum memory and the errors that occur within it.
    \subsubsection{Stacked Quantum Memory and Circuit Application}
      Consider an $m \times n$ stacked quantum memory, which consists of $m$ physical qubits stacked vertically to form a ``cell'', with $n$ such cells arranged horizontally.
      As a result, the entire memory forms $m$ layers, and each layer contains $n$ physical qubits (Figure \ref{fig:stacked_memory}).
      The entire system consists of $mn$ physical qubits, and its state is described as a pure state $\ket{\psi}$ in the tensor product Hilbert space $\mathcal{H}_2^{\otimes mn}$.

      First, consider a Clifford circuit $C$ acting on a single layer ($n$ qubits).
      This circuit $C$ is defined as a composition of operations $U_1, U_2, \dots, U_s$.
      Here, each $U_i$ is originally a 1-qubit or 2-qubit Clifford gate, but by taking the tensor product with the identity operator for the unaffected qubits, it is treated as an operator applied to the entire $n$-qubit space.
      Next, consider a stacked implementation where this circuit $C$ is executed on all layers of the stacked quantum memory.
      This corresponds to applying $U_i$ simultaneously to all $m$ layers, and can be represented as applying the tensor product $U_i^{\otimes m}$ to the $mn$ qubits of the entire stacked quantum memory.
      Therefore, under ideal conditions without errors, the state after application becomes $\ket{\psi_{out}} = U_s^{\otimes m} \dots U_{2}^{\otimes m} U_1^{\otimes m} \ket{\psi}$.
      \begin{figure}[H]
        \centering
        \includegraphics[width=100mm]{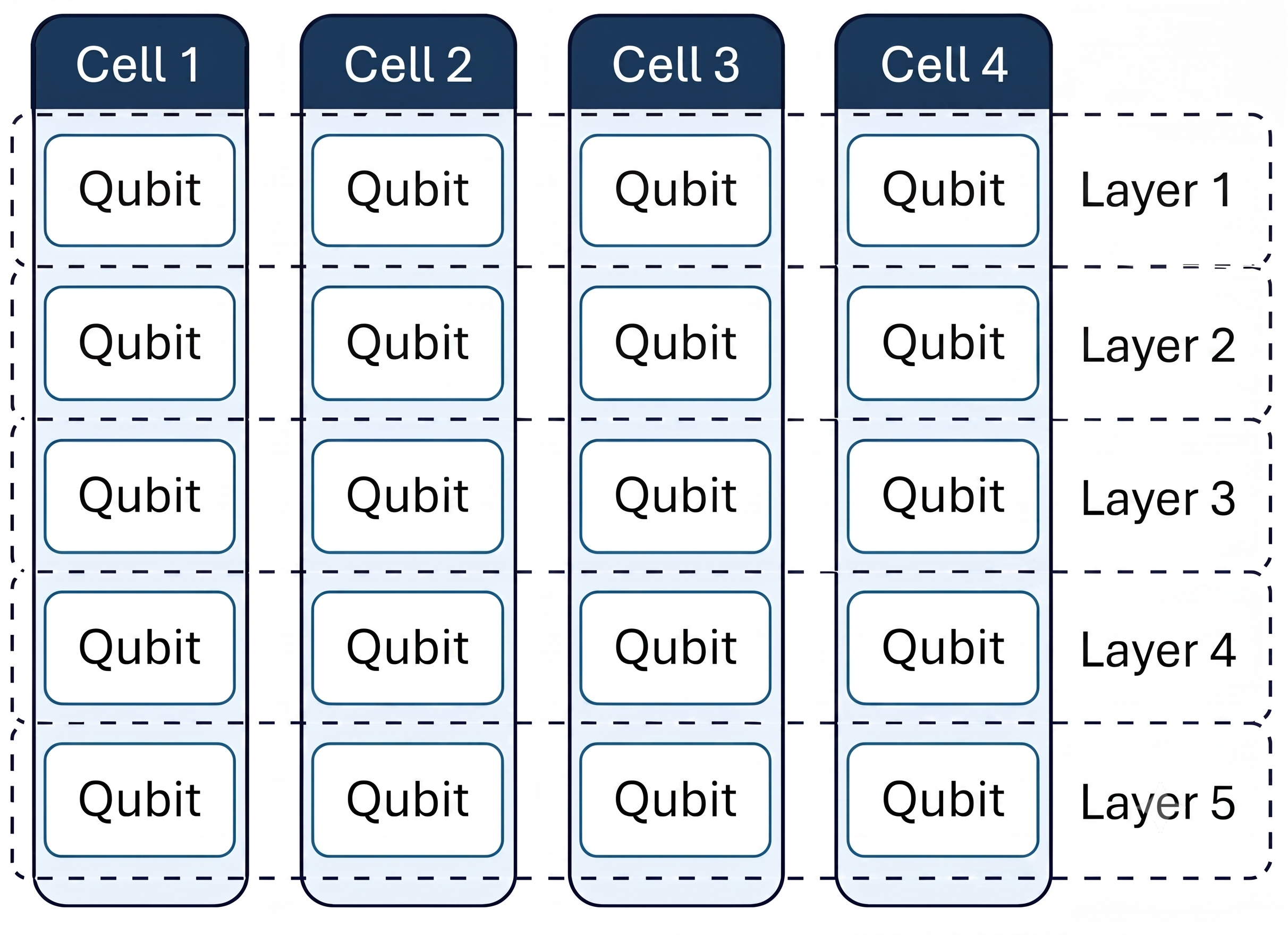}
        \caption{Conceptual diagram of a stacked quantum memory ($m=5, n=4$).}
        \label{fig:stacked_memory}
      \end{figure}
    \subsubsection{Stacked Errors and Definition of Rank}
      A Pauli error $P \in \mathcal{P}_{mn}$ occurring on the stacked quantum memory is viewed as an $m \times n$ matrix with components from $\{I, X, Z, Y\}$.
      Here, we first introduce a mapping $\mu_{n}: \mathcal{P}_n \to \mathbb{F}_2^{2n}$ that maps a Pauli operator $P_i \in \mathcal{P}_n$ in a single layer ($n$ qubits) to a row vector over the finite field $\mathbb{F}_2$, excluding the phase factor.
      When each component of the $i$-th layer is expressed as $P_{i,j} = X^{a_{i,j}}Z^{b_{i,j}} \;(a_{i,j}, b_{i,j} \in \mathbb{F}_2)$, we define $\mu_{n}(P_i) := (a_{i,1}, \cdots, a_{i,n} \mid b_{i,1}, \cdots, b_{i,n})$.
      Using this, we define a mapping $\mu_{m,n}: \mathcal{P}_{mn} \to \mathbb{F}_2^{m \times 2n}$ for the entire stacked error $P \in \mathcal{P}_{mn}$, which associates it with an $m \times 2n$ matrix formed by vertically arranging the row vectors $\mu_{n}(P_i)$ of each layer.
      \begin{equation}
        \mu_{m,n}(P):=
        \left(
            \begin{array}{ccc|ccc} 
              a_{1,1} & \cdots  & a_{1,n} & b_{1,1} & \cdots  & b_{1,n}\\ 
              \vdots  & \ddots & \vdots  & \vdots  & \ddots & \vdots\\
              a_{m,1} & \cdots  & a_{m,n} & b_{m,1} & \cdots  & b_{m,n}\\
            \end{array} 
            \right).
        \end{equation}

        We define the rank $\mathrm{rank}(P)$ of a stacked error $P \in \mathcal{P}_{mn}$ as the matrix rank of the matrix $\mu_{m,n}(P)$ over $\mathbb{F}_2$.
        \begin{equation}\label{eq:def_of_error_rank}
          \mathrm{rank}(P) := \mathrm{rank}_{\mathbb{F}_2}(\mu_{m,n}(P)).
        \end{equation}
        \begin{remark}
          In the literature \cite{qgab}, Delfosse and Z\'{e}mor define the rank of a stacked error $P$ as "the rank of the group generated by the error operators $P_{i,*}$ of each layer (the minimum number of generators required to generate the group)".
          However, since the Pauli group is a non-commutative group and includes phase factors $\{\pm 1, \pm i\}$, directly using a group-theoretic definition of rank creates a discrepancy with the rank of a matrix over $\mathbb{F}_2$ in classical rank-metric codes.
          Therefore, in this paper, to guarantee theoretical rigor in subsequent theorems, we redefine the rank of a quantum error as the rank over $\mathbb{F}_2$ of the $m \times 2n$ matrix $\mu_{m,n}(P)$ using the mapping $\mu_{m,n}$ which removes the phase components.
          This completely ensures the equivalence between the rank of a quantum error and the rank of a classical code.
        \end{remark}
    \subsubsection{Faulty Clifford Circuits}
      A ``fault'' in a gate refers to a situation where an unintended Pauli error is induced on the target qubits.
      Let the stacked errors that occur immediately after the application of each gate be $P^{(1)}, \dots, P^{(s)}$. Then, the state after applying the faulty circuit is as follows:
      \begin{equation}
        P^{(s)}U_s^{\otimes m} \dots P^{(2)}U_2^{\otimes m} P^{(1)}U_1^{\otimes m} \ket{\psi}.
      \end{equation}
      Since a single Clifford gate acts on at most one or two cells, the accompanying stacked error $P^{(t)}$ also acts on at most two cells.
      Therefore, the columns having non-zero components in the matrix $\mu_{m,n}(P^{(t)})$ are restricted to at most 4 columns out of the total $2n$ columns (the $X$ and $Z$ components of the affected cells).
      Thus, the rank of the error at each step is bounded by $\mathrm{rank}(P^{(t)}) \le 4$.
      Next, we present a lemma for the proof of the theorem.
      \begin{lemma}\label{lemma:rank}
        For a Clifford operator $U$ and a Pauli operator $P \in \mathcal{P}_{mn}$, the following holds:
        \begin{equation}
          \mathrm{rank}_{\mathbb{F}_2}(\mu_{m,n}(U^{\otimes m} P (U^{\otimes m})^\dagger))
          = \mathrm{rank}_{\mathbb{F}_2}(\mu_{m,n}(P)).
        \end{equation}
      \end{lemma}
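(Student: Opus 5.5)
The plan is to use the standard fact that a Clifford operator $U$ on $n$ qubits acts on Pauli operators, modulo phases, through a linear map on the binary symplectic representation. Concretely, there is an invertible $2n\times 2n$ matrix $S_U$ over $\mathbb{F}_2$, in fact symplectic, such that for every $Q\in\mathcal{P}_n$
\begin{equation}
  \mu_n(UQU^\dagger) = \mu_n(Q)\, S_U .
\end{equation}
I would justify this briefly as follows. Conjugation by $U$ is a group automorphism of $\mathcal{P}_n$ that maps Pauli operators to Pauli operators. Since $\mu_n$ is a homomorphism from $\mathcal{P}_n$ to $(\mathbb{F}_2^{2n},+)$ whose kernel is the phases, this automorphism descends to an additive, hence $\mathbb{F}_2$-linear, map on $\mathbb{F}_2^{2n}$. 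This map is bijective because conjugation by $U^\dagger$ inverts it. Take $S_U$ to be its matrix, so that its rows are the images $\mu_n(UX_jU^\dagger)$ and $\mu_n(UZ_jU^\dagger)$ of the generators.

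Next I would lift this to the stacked setting. Write $P=\lambda P_1\otimes\cdots\otimes P_m$, with $P_i\in\mathcal{P}_n$ the operator on layer $i$ and $\lambda$ a phase. Then
\begin{equation}
  U^{\otimes m}P(U^{\otimes m})^\dagger=\lambda\,(UP_1U^\dagger)\otimes\cdots\otimes(UP_mU^\dagger).
\end{equation}
Hence the $i$-th row of $\mu_{m,n}(U^{\otimes m}P(U^{\otimes m})^\dagger)$ is $\mu_n(UP_iU^\dagger)=\mu_n(P_i)S_U$. Any phase picked up in a layer is discarded by $\mu_n$. Stacking the rows gives the matrix identity
\begin{equation}
  \mu_{m,n}(U^{\otimes m}P(U^{\otimes m})^\dagger)=\mu_{m,n}(P)\,S_U .
\end{equation}
The conclusion then follows from the fact that right multiplication by an invertible matrix preserves rank over $\mathbb{F}_2$.

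The only step needing care is the first one. There I must check that phases, including the factor $i$ in $Y=iXZ$ and any signs produced by conjugation, do not affect $\mu_n$. This reduces to noting that $\mu_n$ ignores global phase and is a homomorphism: $\mu_n(QR)=\mu_n(Q)+\mu_n(R)$, because reordering $X$ and $Z$ factors only introduces signs.

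The key structural point is that the same $U$ is applied to every layer, so a single matrix $S_U$ acts on all rows simultaneously. If different Cliffords were applied to different layers, the row space would not be transformed uniformly and the rank could change.
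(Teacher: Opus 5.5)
Your proposal is correct and follows essentially the same route as the paper. The paper likewise shows that conjugation by $U$ induces a well-defined, $\mathbb{F}_2$-linear, invertible map on $\mathbb{F}_2^{2n}$, using that $\mu_n$ is a homomorphism that discards phases and that conjugation by $U^\dagger$ inverts the map. It then writes this map as right multiplication by an invertible matrix $A$, applies it row by row to $\mu_{m,n}(P)$ because the same $U$ acts on every layer, and concludes by rank invariance; your explicit handling of the global phase $\lambda$ is slightly more careful than the paper's, and the symplecticity of $S_U$ is not needed.
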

      \begin{proof}
        We define a transformation $f: \mathbb{F}_2^{2n} \to \mathbb{F}_2^{2n}$ such that for any $P_s \in \mathcal{P}_n$,
        \begin{equation}f(\mu_{n}(P_s)) := \mu_{n}(U P_s U^\dagger).\end{equation}
        First, we show the well-definedness of the transformation $f$.
        From the definition of the mapping $\mu_{n}$, for any $P_1, P_2 \in \mathcal{P}_n$ satisfying $\mu_{n}(P_1) = \mu_{n}(P_2)$, it can be written as $P_2 = \lambda P_1 \ (\lambda \in \{\pm 1, \pm i\})$.
        Therefore,
        \begin{equation}
          U P_2 U^\dagger = U (\lambda P_1) U^\dagger = \lambda U P_1 U^\dagger.
        \end{equation}
        Applying the mapping $\mu_{n}$ to both sides yields $\mu_{n}(U P_2 U^\dagger) = \mu_{n}(U P_1 U^\dagger)$, which means $f(\mu_{n}(P_1)) = f(\mu_{n}(P_2))$, proving the well-definedness of $f$.

        Next, we show that the transformation $f$ has linearity over $\mathbb{F}_2$.
        For any $\vec{x}, \vec{y} \in \mathbb{F}_2^{2n}$, we take $P_x, P_y \in \mathcal{P}_n$ such that $\vec{x} = \mu_{n}(P_x), \vec{y} = \mu_{n}(P_y)$.
        From the property of the mapping $\mu_{n}$, $\mu_{n}(P_1 P_2) = \mu_{n}(P_1) + \mu_{n}(P_2)$, we have $\vec{x} + \vec{y} = \mu_{n}(P_x P_y)$, thus
        \begin{equation}
          \begin{aligned}
            f(\vec{x}+\vec{y}) 
            &= f(\mu_{n}(P_x P_y)) = \mu_{n}(U P_x P_y U^\dagger) = \mu_{n}(U P_x U^\dagger U P_y U^\dagger) \\
            &= \mu_{n}(U P_x U^\dagger) + \mu_{n}(U P_y U^\dagger) = f(\vec{x}) + f(\vec{y}),
          \end{aligned}
        \end{equation}
        which shows linearity.

        Next, we show that the transformation $f$ is invertible.
        For $f(\mu_{n}(P_s)) = \mu_{n}(U P_s U^\dagger)$, let $f'(\mu_{n}(P_s)) = \mu_{n}(U^\dagger P_s U)$.
        For any $\vec{x} \in \mathbb{F}_2^{2n}$ ($\vec{x} = \mu_{n}(P_s)$), evaluating the composite transformation yields:
        \begin{equation}
          \begin{aligned}
            f'(f(\vec{x})) &= f'(\mu_{n}(U P_s U^\dagger)) = \mu_{n}(U^\dagger (U P_s U^\dagger) U) \\
            &= \mu_{n}((U^\dagger U) P_s (U^\dagger U)) = \mu_{n}(P_s) = \vec{x}.
          \end{aligned}
        \end{equation}
        Therefore, an inverse mapping $f'$ exists for $f$, proving that $f$ is invertible.

        From the basic properties of linear algebra, an invertible linear transformation $f$ over the vector space $\mathbb{F}_2^{2n}$ can be uniquely described as a right multiplication using an invertible $2n \times 2n$ matrix $A$ over $\mathbb{F}_2$.
        That is, for any $P_s \in \mathcal{P}_n$, we can write $f(\mu_{n}(P_s)) = \mu_{n}(P_s) A$.

        Next, we evaluate the action on the entire stacked quantum memory.
        A Pauli operator $P \in \mathcal{P}_{mn}$ can be expressed using Pauli operators of each layer as $P = P_1 \otimes \dots \otimes P_m$ ($P_i \in \mathcal{P}_n$).
        The conjugate transformation by $U^{\otimes m}$ applied to this is:
        \begin{equation}
          U^{\otimes m} P (U^{\otimes m})^\dagger
          = (U P_1 U^\dagger) \otimes \dots \otimes (U P_m U^\dagger).
        \end{equation}
        From the definition of the mapping $\mu_{m,n}$, $\mu_{m,n}(P)$ is an $m \times 2n$ matrix where $\mu_{n}(P_i)$ is placed as the $i$-th row vector.
        Since the $i$-th row of the transformed matrix becomes $\mu_{n}(U P_i U^\dagger) = f(\mu_{n}(P_i)) = \mu_{n}(P_i)A$, the action on the entire matrix is expressed as
        \begin{equation}
          \mu_{m,n}(U^{\otimes m} P (U^{\otimes m})^\dagger) = \mu_{m,n}(P) A.
        \end{equation}
        Multiplying an invertible matrix to any matrix does not change its rank.
        Therefore, $\mathrm{rank}_{\mathbb{F}_2}(\mu_{m,n}(U^{\otimes m} P (U^{\otimes m})^\dagger)) = \mathrm{rank}_{\mathbb{F}_2}(\mu_{m,n}(P)A) = \mathrm{rank}_{\mathbb{F}_2}(\mu_{m,n}(P))$ holds.
      \end{proof}
      Using this lemma, a theorem similar to \cite[Lemma 2]{qgab} can be proven even under the definition of rank in this paper.

      \begin{theorem}\label{thm:faulty_circuit}
        Assume that $t$ gates out of $s$ Clifford gates are faulty.
        Then, all the stacked errors accumulated through the execution of the entire circuit can be aggregated into a single stacked error $Q$, and its rank satisfies $\mathrm{rank}(Q) \le 4t$.
      \end{theorem}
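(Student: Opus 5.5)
The plan is to push every fault to the end of the circuit by commuting it through the later ideal gates, and then to bound the rank of the resulting product. Write the faulty output as
\[
P^{(s)}U_s^{\otimes m}\cdots P^{(1)}U_1^{\otimes m}\ket{\psi},
\]
where $P^{(j)}=I$ whenever gate $j$ is not faulty. Set $V_j := U_s^{\otimes m}\cdots U_{j+1}^{\otimes m}$, with $V_s=I$. Inserting $V_j^\dagger V_j$ repeatedly rewrites the output as
\[
Q\,U_s^{\otimes m}\cdots U_1^{\otimes m}\ket{\psi},
\qquad
Q := \prod_{j}\bigl(V_j P^{(j)} V_j^\dagger\bigr),
\]
with the product taken in the order $j=s,s-1,\dots,1$. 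Each factor $V_jP^{(j)}V_j^\dagger$ is again a Pauli operator in $\mathcal{P}_{mn}$, because conjugation by Clifford operators preserves the Pauli group. Hence $Q\in\mathcal{P}_{mn}$, which shows that all errors aggregate into a single stacked error. I will check this telescoping carefully by induction on $s$: the step with $s$ gates reduces to the step with $s-1$ gates after one conjugation by $U_s^{\otimes m}$.

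For the rank bound, first apply Lemma~\ref{lemma:rank} repeatedly, once for each factor $U_k^{\otimes m}$ of $V_j=(U_s\cdots U_{j+1})^{\otimes m}$. This gives $\mathrm{rank}(V_jP^{(j)}V_j^\dagger)=\mathrm{rank}(P^{(j)})$. One could also apply the lemma once with $U=U_s\cdots U_{j+1}$, which is itself Clifford. Next, the paper already observed that $\mathrm{rank}(P^{(t)})\le 4$ for a faulty gate, since the error is supported on at most two cells and hence on at most 4 columns of $\mu_{m,n}$. Non-faulty gates contribute the identity, whose rank is $0$.

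Finally, I use additivity of $\mu_{m,n}$ under multiplication. As in the proof of Lemma~\ref{lemma:rank}, the map $\mu_n$ is additive on products up to phase. Applied layerwise, this gives $\mu_{m,n}(PP')=\mu_{m,n}(P)+\mu_{m,n}(P')$. Therefore $\mu_{m,n}(Q)$ is the sum of the $t$ matrices $\mu_{m,n}(V_jP^{(j)}V_j^\dagger)$ coming from the faulty gates. Subadditivity of matrix rank, $\mathrm{rank}(A+B)\le\mathrm{rank}A+\mathrm{rank}B$, then yields $\mathrm{rank}(Q)\le 4t$.

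The main obstacle is expected to be the bookkeeping in the first step: getting the conjugation order right and confirming that $V_j$ is exactly of the form $U^{\otimes m}$ required by Lemma~\ref{lemma:rank}. The rank estimate afterwards is routine linear algebra. A minor point is that the phases arising from reordering Pauli products are irrelevant, because $\mu_{m,n}$ discards phases.
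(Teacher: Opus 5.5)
Your proposal is correct and follows essentially the same route as the paper: conjugate each fault through the later gates to get $Q=\prod_j V_jP^{(j)}V_j^\dagger$, preserve each factor's rank via Lemma~\ref{lemma:rank}, then use additivity of $\mu_{m,n}$, subadditivity of rank, and $\mathrm{rank}(P^{(j)})\le 4$. Your convention $P^{(j)}=I$ for non-faulty gates and your separate index $j$ also remove the paper's reuse of $t$ as both the number of faults and the gate index.
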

      \begin{proof}
        From its definition, a Clifford operation $U$ always has a Pauli operator $P'$ satisfying $U^{\otimes m} P (U^{\otimes m})^\dagger = P'$ for any stacked error $P$ \cite{gottesman1997}.
        By this property, we define $Q^{(t)}$ as the error obtained by applying conjugate transformations to the error $P^{(t)}$ at each step using the subsequent sequence of gate operations $U_s \dots U_{t+1}$ until the very end of the circuit.
        That is,
        \begin{equation}
          Q^{(t)} := (U_s^{\otimes m} \dots U_{t+1}^{\otimes m}) P^{(t)} (U_s^{\otimes m} \dots U_{t+1}^{\otimes m})^\dagger.
        \end{equation}
        Performing this for all $t$, the entire circuit operation can be transformed as follows, ignoring the phase factor $\lambda \in \{\pm 1, \pm i\}$:
        \begin{equation}
          \begin{aligned}
            P^{(s)}U_s^{\otimes m} \dots P^{(1)}U_1^{\otimes m} 
            = \lambda \left( Q^{(s)} \dots Q^{(1)} \right) U_s^{\otimes m} \dots U_1^{\otimes m}.
          \end{aligned}
        \end{equation}
        Here, we define the aggregated final stacked error as $Q := Q^{(s)} \dots Q^{(2)}Q^{(1)}$.
        By Lemma \ref{lemma:rank}, since $\mathrm{rank}(\mu_{m,n}(U^{\otimes m} P (U^{\otimes m})^\dagger))=\mathrm{rank}(\mu_{m,n}(P))$, we have
        \begin{equation}
            \mathrm{rank}_{\mathbb{F}_2}(\mu_{m,n}(Q^{(t)}))
            =\mathrm{rank}_{\mathbb{F}_2}(\mu_{m,n}(P^{(t)})).
        \end{equation}
        Furthermore, using the subadditivity of the matrix rank ($\mathrm{rank}(A+B) \le \mathrm{rank}(A) + \mathrm{rank}(B)$), we obtain the following inequality:
        \begin{equation}
          \begin{aligned}
            \mathrm{rank}(Q) &= \mathrm{rank}_{\mathbb{F}_2}(\mu_{m,n}(Q)) \\
            &= \mathrm{rank}_{\mathbb{F}_2}\left( \mu_{m,n}\left(\prod_{t=1}^s Q^{(s-t+1)}\right) \right) \\
            &= \mathrm{rank}_{\mathbb{F}_2}\left( \sum_{t=1}^s \mu_{m,n}(Q^{(t)}) \right) \\
            &\le \sum_{t=1}^s \mathrm{rank}_{\mathbb{F}_2}(\mu_{m,n}(Q^{(t)})) \\
            &= \sum_{t=1}^s \mathrm{rank}(P^{(t)}).
          \end{aligned}
        \end{equation}
        Considering that the number of faulty gates is $t$, and the rank induced by each gate fault is $\mathrm{rank}(P^{(t)}) \le 4$, it is immediately derived that $\mathrm{rank}(Q) \le 4t$.
      \end{proof}
  \subsection{Gabidulin Codes}
    In this section, we describe the definitions and basic performance of Gabidulin codes \cite{gabidulin1985, delsarte1978}, which are the classical rank-metric codes used in this study, as well as the Hermitian self-orthogonality crucial for constructing quantum codes.
    \subsubsection{Definition and Performance of Gabidulin Codes}
      Let $q$ be a prime power, $m$ be a positive integer, and $\mathbb{F}_{q^m}$ be the extension field of degree $m$ over $\mathbb{F}_q$.
      Also, take a vector $\vec{\alpha}=(\alpha_1, \dots, \alpha_m)\in \mathbb{F}_{q^m}^m$ consisting of $m$ linearly independent elements $\alpha_i\in \mathbb{F}_{q^m}$ over $\mathbb{F}_q$.
      Define a polynomial $f(X)$ of degree $q^{k-1}$ with coefficients $a_i \in \mathbb{F}_{q^m}$:
      \begin{equation}
        f(X) := b_0 X + b_1 X^q + b_2 X^{q^2} + \dots + b_{k-1} X^{q^{k-1}}.
      \end{equation}

      A Gabidulin code $\mathrm{Gab}(\vec{\alpha}, k)$ is defined as a linear space over $\mathbb{F}_{q^m}^m$ consisting of vectors obtained by evaluating all such polynomials $f$ at each element of $\vec{\alpha}$:
      \begin{equation}
        \mathrm{Gab}(\vec{\alpha}, k)
        := \{ (f(\alpha_1), f(\alpha_2), \dots, f(\alpha_m)) \mid b_0, \dots, b_{k-1} \in \mathbb{F}_{q^m} \}.
      \end{equation}

      Each codeword of this code can be regarded as an $m \times m$ matrix by expanding each component into a column vector over $\mathbb{F}_q$ using an arbitrary basis of $\mathbb{F}_{q^m}$ over $\mathbb{F}_q$, and the rank of this matrix is defined as the rank of the codeword.
      Using this, the rank distance $d_R(\vec{x}, \vec{y})$ between two vectors $\vec{x}, \vec{y}$ in $\mathbb{F}_{q^m}^m$ is defined as the rank of their difference vector $\vec{x}-\vec{y}$.
      Gabidulin codes have a minimum rank distance $d = m - k + 1$ and are called Maximum Rank Distance (MRD) codes \cite{gabidulin1985, delsarte1978}.
    \subsubsection{Gabidulin Codes with Hermitian Orthogonality}     
      We define the trace function from $\mathbb{F}_{q^m}$ to $\mathbb{F}_q$ as follows:
      \begin{equation}
        \mathrm{Tr} (x) := x + x^q + \dots + x^{q^{m-1}}.
      \end{equation}
      Then, using the Kronecker delta $\delta_{ij}$, a basis $\{\alpha_1, \dots, \alpha_m\}$ over $\mathbb{F}_{q^m}$ satisfying $\mathrm{Tr} (\alpha_i \alpha_j) = \delta_{ij}$ is called a self-dual basis.
      It is known that a necessary and sufficient condition for the existence of a self-dual basis over a finite field $\mathbb{F}_{q^m}$ is that either $q$ is even, or both $q$ and $m$ are odd \cite{wan2003}.
      When the extension degree $m$ is even, the Hermitian inner product $\langle \vec{x}, \vec{y} \rangle_H$ of two vectors $\vec{x} = (x_1, \dots, x_m)$ and $\vec{y} = (y_1, \dots, y_m)$ in $\mathbb{F}_{q^m}^m$ is defined as follows:
      \begin{equation}
        \langle \vec{x}, \vec{y} \rangle_H := \sum_{i=1}^m x_i y_i^{q^{m/2}}.
      \end{equation}
      \begin{lemma}[{Islam--Horlemann \cite[ Corollary 3.3]{islam2023}}]\label{thm:islam_hermitian}
        Let $m$ be even.
        Let $\vec{\alpha}$ be a vector consisting of elements that form a self-dual basis of $\mathbb{F}_{q^m}$ over $\mathbb{F}_q$. If the condition that the code dimension $k \le m/2$ is satisfied, then $C=\mathrm{Gab}(\vec{\alpha}, k)$ becomes self-orthogonal with respect to the Hermitian inner product.
        That is,
        $C \subseteq C^{\perp_H}$.
      \end{lemma}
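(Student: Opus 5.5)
Since $C$ is $\mathbb{F}_{q^m}$-linear and the Hermitian form is additive in each argument, it suffices to check that the generating vectors are pairwise Hermitian-orthogonal. Write $h := m/2$, so $y^{q^{h}}$ is the conjugation of $\mathbb{F}_{q^m}$ over $\mathbb{F}_{q^{h}}$. Then $\langle \vec{x}, \vec{y}\rangle_H$ is $\mathbb{F}_{q^m}$-linear in $\vec{x}$ and conjugate-linear in $\vec{y}$. Consequently $C \subseteq C^{\perp_H}$ follows once we show that $\langle \vec{g}_i, \vec{g}_j\rangle_H = 0$ for all $0 \le i,j \le k-1$, where
\[
\vec{g}_i := (\alpha_1^{q^i}, \dots, \alpha_m^{q^i})
\]
is the evaluation vector of the monomial $X^{q^i}$.

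Next I compute this inner product directly:
\[
\langle \vec{g}_i, \vec{g}_j\rangle_H = \sum_{l=1}^m \alpha_l^{q^i}\alpha_l^{q^{j+h}} = \Bigl(\sum_{l=1}^m \alpha_l\,\alpha_l^{q^{s}}\Bigr)^{q^i}, \qquad s := j+h-i .
\]
Here $s$ is read modulo $m$. Since $0\le i,j\le k-1\le h-1$, we have $j-i \in \{-(h-1),\dots,h-1\}$, and therefore $s \in \{1,\dots,m-1\}$; in particular $s \not\equiv 0 \pmod m$. So it suffices to prove the following:
\[
S_s := \sum_{l=1}^m \alpha_l\alpha_l^{q^s} = 0 \quad \text{for every } s \not\equiv 0 \pmod m .
\]

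This is the key step. For it I use the self-dual basis through its matrix $M := (\alpha_l^{q^t})_{l,t}$, an $m\times m$ Moore matrix with $l$ indexing $\alpha_l$ and $t=0,\dots,m-1$ indexing Frobenius powers. Then
\[
(MM^{\mathsf T})_{l,l'} = \sum_t \alpha_l^{q^t}\alpha_{l'}^{q^t} = \mathrm{Tr}(\alpha_l\alpha_{l'}) = \delta_{ll'},
\]
so $MM^{\mathsf T}=I$. Because $M$ is square, this forces $M^{\mathsf T}M=I$ as well. Reading off the $(t,t')$ entry gives
\[
\sum_l \alpha_l^{q^t}\alpha_l^{q^{t'}} = \delta_{tt'} .
\]
Taking $t=0$ and $t'=s\neq 0$ yields $S_s=0$.

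The main obstacle is this passage from the trace condition, which is orthonormality of the rows, to orthogonality of the Frobenius columns. It hinges on $M$ being square and invertible, which holds because the $\alpha_l$ are linearly independent over $\mathbb{F}_q$ (Moore determinant nonzero). Hence $MM^{\mathsf T}=I$ implies $M^{\mathsf T}M=I$. The only other point needing care is the range check $s\in\{1,\dots,m-1\}$, which is exactly where $k\le m/2$ is used. Combining the two steps with bilinearity gives $C\subseteq C^{\perp_H}$.
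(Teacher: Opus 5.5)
Your proof is correct. The reduction to $\langle\vec{g}_i,\vec{g}_j\rangle_H=0$ by sesquilinearity is sound, and so is the range check $s=h+(j-i)\in\{1,\dots,m-1\}$, which is the only place $k\le m/2$ is used. The step from $MM^{\mathsf{T}}=I$ to $M^{\mathsf{T}}M=I$ is also fine; invertibility of $M$ already follows from $MM^{\mathsf{T}}=I$, so the Moore-determinant remark is unnecessary.

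The paper gives no argument of its own here. It imports the lemma from Islam--Horlemann, where it appears as a corollary of a more general study of Galois duals of Gabidulin codes over a self-dual basis. Your direct check is therefore a self-contained substitute, and a short one, but as written it only proves the inclusion.

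It is worth pushing your identity $\sum_l\alpha_l^{q^t}\alpha_l^{q^{t'}}=\delta_{tt'}$ one step further. The vectors $\vec{g}_0,\dots,\vec{g}_{m-1}$ form a basis of $\mathbb{F}_{q^m}^m$, and $\langle\vec{g}_i,\vec{g}_t\rangle_H=(\sum_l\alpha_l\alpha_l^{q^{t+h-i}})^{q^i}$ equals $1$ if $t\equiv i+h \pmod m$ and $0$ otherwise. Hence, with indices read mod $m$,
\[
C^{\perp_H}=\mathrm{span}_{\mathbb{F}_{q^m}}\{\vec{g}_t : t\notin\{h,\dots,h+k-1\}\}=\mathrm{Gab}(\vec{\alpha}^{\,q^{h+k}},m-k),
\]
where $\vec{\alpha}^{\,q^{h+k}}$ is the componentwise Frobenius power. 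This gives two further facts:
\begin{itemize}
\item The condition $k\le m/2$ is also necessary: if $k>h$ then $\langle\vec{g}_0,\vec{g}_h\rangle_H=\sum_l\alpha_l^2=1$.
\item $C^{\perp_H}$ is itself a Gabidulin code, hence MRD with minimum rank distance $k+1$. The paper later invokes exactly this fact, again only by citation, to obtain $d_R=k+1$.
\end{itemize}
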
 
    
\section{Quantum Rank-Metric Codes}\label{sec:quantum_rm_code}
  In this section, we generalize and formulate the conditions and methods for constructing quantum rank-metric codes applicable to stacked quantum memories from classical codes, as proposed by Delfosse and Z\'{e}mor. \cite{qgab}.
  
  First, we define an isomorphic mapping $M:\mathbb{F}_2^{2mn}\to \mathbb{F}_2^{m\times 2n}$ that maps a vector of length $2mn$ to an $m \times 2n$ matrix.
  For $\vec{c}=(a_{1,1}, \dots, a_{1,n}, a_{2,1}, \dots,  a_{m,n},b_{1,1}, \dots,  b_{m,n})\in \mathbb{F}_2^{2mn}$, we define:
  \begin{equation}
    M(\vec{c}):=
        \left(
            \begin{array}{ccc|ccc} 
              a_{1,1} & \cdots  & a_{1,n} & b_{1,1} & \cdots  & b_{1,n}\\ 
              \vdots  & \ddots & \vdots  & \vdots  & \ddots & \vdots\\
              a_{m,1} & \cdots  & a_{m,n} & b_{m,1} & \cdots  & b_{m,n}\\
            \end{array} 
            \right).
  \end{equation}
  
  Next, using $\vec{a}=(a_1,\dots,a_n),\vec{b}=(b_1,\dots,b_n)\in\mathbb{F}_2^n$, we denote $(\vec{a}|\vec{b})=(a_1,\dots,a_n,b_1,\dots,b_n)\in\mathbb{F}_2^{2n}$.
  Using this notation, for binary vectors $(\vec{a}|\vec{b}), (\vec{a'}|\vec{b'})\in\mathbb{F}_2^{2n}$, we define the symplectic inner product $\langle (\vec{a}|\vec{b}), (\vec{a'}|\vec{b'})\rangle _{S}$ as follows:
  \begin{equation}
    \langle (\vec{a}|\vec{b}), (\vec{a'}|\vec{b'})\rangle _{S}
    := \langle\vec{a},\vec{b'}\rangle_{E} - \langle\vec{a'},\vec{b}\rangle_{E}.
  \end{equation}
  where $\langle\vec{a},\vec{b}\rangle_{E}$ represents the Euclidean inner product $\sum_{i=1}^n a_i b_i$.
  
  A commutative subgroup of the Pauli group is called a stabilizer group.
  We examine the procedure for constructing the stabilizer group of a quantum rank-metric code from a linear subspace $C \subset \mathbb{F}_2^{2mn}$ over a binary field.
  We define the set of Pauli matrices corresponding to the classical code $C$ as:
  \begin{equation}
    \mathcal{S} := \{ P \in \mathcal{P}_{mn} \mid M^{-1}(\mu_{m,n}(P)) \in C \}.
  \end{equation}
  A necessary and sufficient condition for two Pauli matrices to commute is that the symplectic inner product between their corresponding vectors is $0$ \cite{matsumoto2000,crss1997, matsumoto2021}.
  Therefore, if the classical code $C$ satisfies symplectic self-orthogonality $C \subset C^{\perp_S}$, any two elements of $\mathcal{S}$ commute with each other, and $\mathcal{S}$ fulfills the requirements as a stabilizer group.
  
  In this construction, the code space of the quantum code is defined as the simultaneous eigenspace corresponding to a specific set of eigenvalues for each element of the stabilizer group $\mathcal{S}$.
  The definition of codes using this simultaneous eigenspace follows the framework of previous studies \cite{crss1997, matsumoto2021}.
    
  When a quantum rank-metric code that encodes $K$ logical qubits using $N$ physical qubits and has a minimum rank distance of $D_R$ is denoted as $[[N, K, D_R]]$, the following theorem is obtained:

  \begin{theorem}\label{thm:general_qrm}
  If a linear subspace $C\subset \mathbb{F}_2^{2mn}$ has $|C|=2^{2nk}$, satisfies $C\subset C^{\perp_S}$, and 
  $d_R=\min_{\vec{c}\in C^{\perp_S}\setminus C} \mathrm{rank}_{\mathbb{F}_2}(M(\vec{c}))$ holds,
  then we can construct a $[[m\times n, n(m-2k), d_R]]$ quantum rank-metric code applicable to a stacked quantum memory of $m$ layers and $n$ cells.
\end{theorem}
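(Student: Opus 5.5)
The plan is to run the standard stabilizer-code argument of \cite{crss1997, matsumoto2021} and then translate its weight statement into a rank statement through the map $M$ and the rank definition (\ref{eq:def_of_error_rank}).

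\textbf{Stabilizer and code dimension.} As noted just before the theorem, $C\subset C^{\perp_S}$ makes $\mathcal{S}$ a commutative subgroup of $\mathcal{P}_{mn}$. I will choose phases on a set of generators so that $-I\notin$ the resulting abelian group $S_0$, which is isomorphic to $C$ via $M^{-1}\circ\mu_{m,n}$.
\begin{itemize}
\item Since $|C|=2^{2nk}$, $C$ has dimension $2nk$ over $\mathbb{F}_2$, so $S_0$ has $2nk$ independent generators.
\item A common eigenspace of $2nk$ independent commuting Pauli operators on $mn$ qubits has dimension $2^{mn-2nk}=2^{n(m-2k)}$. This uses the standard projector argument: each generator halves the dimension, because the eigenspaces are exchanged by a Pauli operator that anticommutes with exactly one generator. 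Such an operator exists because the symplectic form on $\mathbb{F}_2^{2mn}$ is nondegenerate.
\item Hence the code encodes $K=n(m-2k)$ logical qubits in $N=mn$ physical qubits.
\end{itemize}

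\textbf{Error-correction condition.} Next I apply the Knill--Laflamme criterion in the stabilizer form \cite{crss1997}. A set $\mathcal{E}$ of Pauli errors is correctable if, for all $E_1,E_2\in\mathcal{E}$, the product $E_1^\dagger E_2$ either anticommutes with some element of $S_0$ or is proportional to an element of $S_0$.
\begin{itemize}
\item By the commutation criterion, $E_1^\dagger E_2$ commutes with all of $S_0$ exactly when $M^{-1}(\mu_{m,n}(E_1^\dagger E_2))\in C^{\perp_S}$.
\item So the only bad case is $M^{-1}(\mu_{m,n}(E_1^\dagger E_2))\in C^{\perp_S}\setminus C$, and such a product has rank at least $d_R$ by hypothesis.
\item Therefore every nonzero undetectable logical operator has rank $\ge d_R$. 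This is precisely the meaning of minimum rank distance $D_R=d_R$, with rank in the sense of (\ref{eq:def_of_error_rank}), i.e.\ $\mathrm{rank}_{\mathbb{F}_2}(M(\vec c))=\mathrm{rank}(P)$ when $\vec c=M^{-1}(\mu_{m,n}(P))$.
\item It follows that all errors of rank at most $\lfloor (d_R-1)/2\rfloor$ are correctable. For $E_1,E_2$ of rank $\le\lfloor (d_R-1)/2\rfloor$, we have $\mu_{m,n}(E_1^\dagger E_2)=\mu_{m,n}(E_1)+\mu_{m,n}(E_2)$ as matrices. Subadditivity of rank, as used in Theorem \ref{thm:faulty_circuit}, then gives rank $\le d_R-1<d_R$, which excludes the bad case.
\end{itemize}

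\textbf{Main obstacle.} The step needing the most care is the interface between phases and the $\mathbb{F}_2$ description.
\begin{itemize}
\item First, the elements of $\mathcal{S}$ as defined include all phases. One must pass to a genuine abelian subgroup $S_0$ not containing $-I$, so that the eigenspace is nonzero. I will do this by fixing a basis of $C$, choosing Hermitian representatives for the basis elements, and taking the group they generate.
\item Second, one must verify that $\mu_{m,n}$ turns products into sums, so that the rank of the product error is controlled by matrix addition. This was already used in Lemma \ref{lemma:rank}.
\end{itemize}
With these two points in hand, the dimension count and the distance claim are routine. This yields the $[[m\times n,\,n(m-2k),\,d_R]]$ code on the $m$-layer, $n$-cell stacked memory.
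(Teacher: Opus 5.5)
Your proposal is correct and follows essentially the paper's route. The code dimension $2^{mn-2nk}$ comes from the $2nk$-dimensional symplectic self-orthogonal $C$, and the distance is read off from $C^{\perp_S}\setminus C$ using $\mathrm{rank}_{\mathbb{F}_2}(M(\vec{c}))=\mathrm{rank}(P)$; you simply add details the paper takes from the standard stabilizer formalism, namely fixing phases so that $-I\notin S_0$, the halving argument for the eigenspace dimension, and the explicit Knill--Laflamme plus rank-subadditivity step showing that errors of rank at most $\lfloor (d_R-1)/2\rfloor$ are correctable.
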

\begin{proof}    
  The number of logical qubits $K$ is obtained from the dimension of the simultaneous eigenspace defined by the group $\mathcal{S}$.
  In a system with $mn$ physical qubits, the dimension of the simultaneous eigenspace corresponding to a symplectic self-orthogonal subspace of dimension $\dim_{\mathbb{F}_2} C = 2nk$ is $2^{mn - 2nk}$.
  Therefore,
  \begin{equation}
  K = \log_2(2^{mn - 2nk}) = mn - 2nk = n(m-2k)
  \end{equation}
  is obtained.
  The minimum distance of a stabilizer code is defined by the minimum weight of the Pauli operators that commute with the stabilizer group $\mathcal{S}$ but are not included in $\mathcal{S}$ itself.
  This set of operators corresponds to $C^{\perp_S} \setminus C$ in the vector space.
  Since the rank of a stacked error in this model is defined by definition (\ref{eq:def_of_error_rank}) and the rank of the matrix $M(\vec{c})$, the minimum rank distance is
  \begin{equation}
  d_R = \min_{\vec{c} \in C^{\perp_S} \setminus C} \mathrm{rank}_{\mathbb{F}_2}(M(\vec{c})).
  \end{equation}
\end{proof}
\section{Quantum Code Construction Method by Matsumoto and Uyematsu}\label{sec:mu_construction}
  In this section, we review the quantum code construction method using Hermitian self-orthogonality of classical codes by Matsumoto and Uyematsu, which is key to the construction method in this paper.
  
  Let $q$ be a prime or prime power, $n$ be a positive integer, and for an element $\alpha \in \mathbb{F}_{q^n}$, if $\{ \alpha, \alpha^q, \alpha^{q^2}, \dots, \alpha^{q^{n-1}} \}$ forms a basis of $\mathbb{F}_{q^n}$ over $\mathbb{F}_q$, this is called a normal basis.
  According to the Normal Basis Theorem \cite{lidl1997}, it is known that a normal basis exists for any finite extension field of a finite field.
  Therefore, hereinafter let $p$ be a prime, $n$ be a positive integer, and let $\{ \theta, \theta^p, \theta^{p^2}, \dots, \theta^{p^{2n-1}} \}$ be the normal basis of $\mathbb{F}_{p^{2n}}$ over $\mathbb{F}_p$.
  Next, for $\vec{v} = (a_1, \dots, a_n, b_1, \dots, b_n) \in \mathbb{F}_p^{2n}$, we define a mapping $\phi : \mathbb{F}_p^{2n} \to \mathbb{F}_{p^{2n}}$ as follows:
  \begin{equation}
    \phi(\vec{v}) := \sum_{j=1}^n a_j \theta^{p^{j-1}} + \sum_{j=1}^n b_j \theta^{p^{n+j-1}}.
  \end{equation}

  Furthermore, for $\vec{x}, \vec{y} \in \mathbb{F}_p^{2n}$, we define a linear mapping $T : \mathbb{F}_p^{2n} \times \mathbb{F}_p^{2n} \to \mathbb{F}_p$ as follows:
  \begin{equation}
    T(\vec{x},\vec{y}) := c_{n+1} - c_1.
  \end{equation}
  where the expansion of the product $\phi(\vec{x}) \phi(\vec{y})^{p^n}$ using the normal basis is given by $\phi(\vec{x}) \phi(\vec{y})^{p^n} = \sum_{i=1}^{2n} c_i \theta^{p^{i-1}}$ ($c_i \in \mathbb{F}_p$).
  Let the representation matrix of the linear mapping $T$ with respect to the standard basis of the vector space $\mathbb{F}_p^{2n}$ be denoted as $T$.
  That is, for any vectors $\vec{a}, \vec{b} \in \mathbb{F}_p^{2n}$, we let $T$ be the $2n \times 2n$ matrix satisfying $T(\vec{a}, \vec{b}) = \vec{a} T \vec{b}^T$.
  Here, $\vec{b}^T$ denotes the transpose vector of $\vec{b}$.

  Next, let $S = \begin{pmatrix} 0 & I_n \\ -I_n & 0 \end{pmatrix}$.
  Here, $I_n$ is the $n \times n$ identity matrix.
  Then, for the representation matrix of the linear mapping $T$, there always exists an invertible $2n \times 2n$ matrix $D$ over $\mathbb{F}_p$ that satisfies the following \cite[Lemma 10]{matsumoto2000}:
  \begin{equation}
    D T D^T = S.
  \end{equation}
    
  With the above preparations, we expand an extension field vector $\vec{c}=(c_1, \dots, c_m)\in \mathbb{F}_{p^{2n}}^m\quad(c_i \in \mathbb{F}_{p^{2n}})$ into a vector over the base field.
  For each component $c_i$, we compute $\vec{e}_i = \phi^{-1}(c_i)D^{-1}$ to obtain the row vector $\vec{e}_i \in \mathbb{F}_p^{2n}$.
  Let the components of $\vec{e}_i$ be $\vec{e}_i=(a_{i,1},\dots,a_{i,n},b_{i,1},\dots,b_{i,n})$.
  By doing this for each $i\in\{1,\dots,m\}$ and properly concatenating them, we define a base field vector $\Phi(\vec{c})\in\mathbb{F}_p^{2mn}$ as follows:
  $\Phi(\vec{c})=(a_{1,1}, \dots, a_{1,n}, a_{2,1}, \dots,  a_{m,n},b_{1,1}, \dots,  b_{m,n})$.

  Then, for $\vec{c},\vec{c'}\in\mathbb{F}_{p^{2n}}^m$, since $\langle \vec{c},\vec{c'}\rangle_H=0 \to \langle \Phi(\vec{c}),\Phi(\vec{c'})\rangle_S =0$ holds \cite[Proposition 11]{matsumoto2000}, we obtain the following theorem.
  \begin{theorem}[{Matsumoto-Uyematsu \cite[Theorem 12]{matsumoto2000}}]\label{thm:matsumoto_uyematsu}
    If a linear subspace $C\subset \mathbb{F}_{p^{2n}}^m$ is an $[m, (m-k)/2]$ code satisfying $C\subset C^{\perp_H}$, then $\Phi(C)\subset \Phi(C)^{\perp_S}$, and a $[[mn,kn]]_p$ quantum code can be constructed from $\Phi(C)$.
  \end{theorem}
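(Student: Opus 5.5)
The plan is to prove the two claims of Theorem~\ref{thm:matsumoto_uyematsu} in turn: first the symplectic self-orthogonality of $\Phi(C)$, then the parameters $[[mn,kn]]_p$. Both will be reduced to a linearity and dimension count for $\Phi$, together with the stabilizer framework recalled in Section~\ref{sec:quantum_rm_code} (now over $\mathbb{F}_p$).

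The first step is structural. The map $\phi:\mathbb{F}_p^{2n}\to\mathbb{F}_{p^{2n}}$ is an $\mathbb{F}_p$-linear bijection, because it expands an element in a normal basis. Right multiplication by the invertible matrix $D^{-1}$ is also a bijection. Hence $c_i\mapsto \vec{e}_i=\phi^{-1}(c_i)D^{-1}$ is an $\mathbb{F}_p$-linear isomorphism $\mathbb{F}_{p^{2n}}\to\mathbb{F}_p^{2n}$. Concatenating these coordinatewise and then permuting coordinates into the $(a|b)$ layout shows that $\Phi:\mathbb{F}_{p^{2n}}^m\to\mathbb{F}_p^{2mn}$ is an $\mathbb{F}_p$-linear isomorphism. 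Therefore $\Phi(C)$ is an $\mathbb{F}_p$-subspace with
\[
\dim_{\mathbb{F}_p}\Phi(C)=2n\cdot\dim_{\mathbb{F}_{p^{2n}}}C = 2n\cdot\tfrac{m-k}{2}=n(m-k).
\]

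The second step is orthogonality. Take $\vec{x},\vec{y}\in\Phi(C)$ and write $\vec{x}=\Phi(\vec{c})$, $\vec{y}=\Phi(\vec{c'})$ with $\vec{c},\vec{c'}\in C$. Since $C\subset C^{\perp_H}$, we have $\langle\vec{c},\vec{c'}\rangle_H=0$. Proposition~11 of \cite{matsumoto2000}, quoted just before the theorem, then gives $\langle\vec{x},\vec{y}\rangle_S=0$, so $\Phi(C)\subset\Phi(C)^{\perp_S}$.

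If I had to justify that proposition rather than cite it, I would argue as follows. The symplectic form on $\mathbb{F}_p^{2mn}$ is the sum over the $m$ blocks $i$ of $\vec{e}_i S \vec{e'}_i^T$, where $\vec{e'}_i=\phi^{-1}(c'_i)D^{-1}$. Using $DTD^T=S$, each term becomes
\[
\vec{e}_i S\,\vec{e'}_i^{\,T}=\phi^{-1}(c_i)\,D^{-1}(DTD^T)(D^{-1})^T\,\phi^{-1}(c'_i)^T = T\bigl(\phi^{-1}(c_i),\phi^{-1}(c'_i)\bigr).
\]
This is an $\mathbb{F}_p$-linear functional $c_{n+1}-c_1$ applied to the normal-basis coordinates of $c_i{c'_i}^{p^n}$. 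Summing over $i$ and using linearity of coordinate extraction, the total is this same functional applied to $\langle\vec{c},\vec{c'}\rangle_H=0$, so it vanishes. This checking of how $T$, $D$ and $S$ interact is the main obstacle, and it is exactly why the citation is used.

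The final step is the stabilizer count. Let $\mathcal{S}$ be the Pauli (Weyl) operators on $mn$ $p$-dimensional systems whose vectors lie in $\Phi(C)$. By self-orthogonality, $\mathcal{S}$ is commutative modulo phases, and suitable phases give an abelian stabilizer group. Its simultaneous eigenspaces have dimension
\[
p^{mn-\dim\Phi(C)}=p^{mn-n(m-k)}=p^{kn},
\]
as in the proof of Theorem~\ref{thm:general_qrm}. This yields an $[[mn,kn]]_p$ code.
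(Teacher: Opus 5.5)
Your proposal is correct and takes the same route as the paper. The paper invokes Proposition 11 of \cite{matsumoto2000} to pass from $C\subset C^{\perp_H}$ to $\Phi(C)\subset\Phi(C)^{\perp_S}$ and then reads off $[[mn,kn]]_p$ from the stabilizer count; your derivation of that proposition via $D^{-1}S(D^{-1})^T=T$ and your explicit count $\dim_{\mathbb{F}_p}\Phi(C)=n(m-k)$ spell out what the paper leaves to the citation.
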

\section{Delfosse-Z\'{e}mor Construction Method of Quantum Rank-Metric Codes}\label{sec:css_construction}
  In this section, we review the conventional Delfosse-Z\'{e}mor construction method for quantum rank-metric codes \cite{qgab}.
  
  First, let $n$ be an odd number, and consider a self-dual normal basis of $\mathbb{F}_{2^n}$ over $\mathbb{F}_{2}$, which possesses properties of both a self-dual basis and a normal basis:
  $\{ \alpha,\alpha^2, \alpha^{2^2}, \dots, \alpha^{2^{n-1}} \}
  \quad (\forall i,j,\quad \mathrm{Tr} (\alpha^{2^{i}} \alpha^{2^{j}}) =\delta_{i,j})$.
  It is known that such a self-dual normal basis always exists if the extension degree of the extension field is odd \cite{macwilliams1977}.
  
  Next, for this self-dual normal basis, let $\vec{\alpha}=(\alpha,\alpha^2, \alpha^{2^2}, \dots, \alpha^{2^{n-1}})$.
  We define the $2^r$-th power of $\vec{\alpha}$ as follows:
  \begin{equation}
    \vec{\alpha}^{2^r}=( \alpha^{2^r},\alpha^{2^{r+1}},  \dots,\alpha^{2^{n-1}},\alpha,\dots, \alpha^{2^{r-1}} ).
  \end{equation}

  Furthermore, we define the trace inner product for extension field vectors as follows.
  For $\vec{\beta}=(\beta_1,\dots,\beta_n), \vec{\gamma}=(\gamma_1,\dots,\gamma_n) \in \mathbb{F}_{q^m}^n$,
  \begin{equation}
    \langle\vec{\beta},\vec{\gamma} \rangle_{\mathrm{Tr}}
  :=\sum_{i=1}^n \mathrm{Tr}(\beta_i \gamma_i).
  \end{equation}

  Then, the following lemma holds.
  \begin{lemma}[{Delfosse-Z\'{e}mor \cite[Proposition 1]{qgab}}]\label{lemma-trace}
    The dual code of $\mathrm{Gab}(\vec{\alpha}, r)$ with respect to the trace inner product is $\mathrm{Gab}(\vec{\alpha}^{2^r}, n-r)$. That is, ${\mathrm{Gab}(\vec{\alpha}, r)}^{\perp_{\mathrm{Tr}}} =\mathrm{Gab}(\vec{\alpha}^{2^r}, n-r)$.
  \end{lemma}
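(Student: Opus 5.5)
Since the trace form is nondegenerate, the natural strategy is to prove the containment $\mathrm{Gab}(\vec{\alpha}^{2^r}, n-r)\subseteq \mathrm{Gab}(\vec{\alpha}, r)^{\perp_{\mathrm{Tr}}}$ and then compare sizes. The pairing $(x,y)\mapsto \mathrm{Tr}(xy)$ on $\mathbb{F}_{2^n}$ is nondegenerate, so $\langle\cdot,\cdot\rangle_{\mathrm{Tr}}$ is a nondegenerate $\mathbb{F}_2$-bilinear form on $\mathbb{F}_{2^n}^n$. Hence for the $\mathbb{F}_2$-subspace $\mathrm{Gab}(\vec{\alpha}, r)$ of size $2^{nr}$, the trace dual has size $2^{n^2-nr}$. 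The code $\mathrm{Gab}(\vec{\alpha}^{2^r}, n-r)$ has $\mathbb{F}_{2^n}$-dimension $n-r$: its coordinates are the distinct conjugates $\alpha^{2^j}$, which are $\mathbb{F}_2$-linearly independent, and a nonzero linearized polynomial of $2$-degree below $n$ cannot vanish on $n$ independent points. So it also has size $2^{n(n-r)}$, and containment will give equality.

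For the containment, $\mathbb{F}_2$-bilinearity and the $\mathbb{F}_2$-linearity of $\mathrm{Tr}$ reduce everything to monomials. I will pair $f(X)=bX^{2^i}$ with $0\le i\le r-1$ against $g(X)=cX^{2^j}$ with $0\le j\le n-r-1$. The inner product is
\begin{equation}
\sum_{l=0}^{n-1}\mathrm{Tr}\bigl(b\,\alpha^{2^{l+i}}\,c\,\alpha^{2^{l+r+j}}\bigr),
\end{equation}
with exponents of $2$ taken mod $n$. Using $\mathrm{Tr}(x)=\mathrm{Tr}(x^{2})$, I will move the Frobenius power onto the $\alpha$'s so that the sum over $l$ runs over all conjugates. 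Writing $s=r+j-i$, the sum becomes
\begin{equation}
\mathrm{Tr}\Bigl(bc\sum_{l}\alpha^{2^l}\alpha^{2^{l+s}}\Bigr)\quad\text{(after a Frobenius twist of } bc\text{)},
\end{equation}
or something equivalent. Concretely, I expect to rewrite each term as $\mathrm{Tr}\bigl((bc)^{2^{-i-l}}\,\alpha\,\alpha^{2^{s}}\bigr)$, or to swap sums and use the trace formula to get $\sum_l\sum_u (b c)^{2^u}\alpha^{2^{l+i+u}}\alpha^{2^{l+i+s+u}}$. Reindexing then yields
\begin{equation}
\mathrm{Tr}(bc)\cdot \sum_{l}\alpha^{2^l}\alpha^{2^{l+s}}\quad\text{or}\quad \mathrm{Tr}\Bigl(bc\cdot\sum_l \alpha^{2^l}\alpha^{2^{l+s}}\Bigr).
\end{equation}

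The key fact I will use is that $\sum_{l}\alpha^{2^l}\alpha^{2^{l+s}}=\mathrm{Tr}(\alpha\cdot\alpha^{2^s})$, which lies in $\mathbb{F}_2$ and equals $\delta_{0,s\bmod n}$ by self-duality of the normal basis. Here $1\le s=r+j-i\le n-1$: the upper bound holds because $j\le n-r-1$ and $i\ge 0$, and the lower bound because $i\le r-1$ and $j\ge 0$. Therefore $s\not\equiv 0 \pmod n$, the inner sum vanishes, and every monomial pair is trace-orthogonal.

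The main obstacle is the bookkeeping of the Frobenius reindexing in this last computation. The cleanest route is to write the $l$-th coordinate product as $\bigl(\alpha\,\alpha^{2^{s}}\bigr)^{2^{l+i}}$ times $bc$. Then
\begin{equation}
\sum_l \mathrm{Tr}\bigl(bc\,(\alpha\alpha^{2^s})^{2^{l+i}}\bigr)=\mathrm{Tr}\Bigl(bc\,\mathrm{Tr}(\alpha\alpha^{2^s})\Bigr)=\mathrm{Tr}(bc)\,\delta_{s,0}=0,
\end{equation}
because as $l$ ranges over $\mathbb{Z}/n$, the elements $(\alpha\alpha^{2^s})^{2^{l+i}}$ range over all conjugates of $\alpha\alpha^{2^s}$.
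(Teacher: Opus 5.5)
Your proof is correct. The paper does not prove this lemma itself; it only cites Proposition~1 of Delfosse--Z\'{e}mor, so there is no written argument to compare against. Your argument works as a self-contained proof. The identity $\alpha^{2^{l+i}}\alpha^{2^{l+r+j}}=(\alpha\alpha^{2^s})^{2^{l+i}}$ with $s=r+j-i\in\{1,\dots,n-1\}$ collapses the coordinate sum to $\mathrm{Tr}\bigl(bc\,\mathrm{Tr}(\alpha\alpha^{2^s})\bigr)=0$. Bi-additivity then extends this from monomials to all codewords, and the $\mathbb{F}_2$-dimension count closes the argument.

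Two small tidy-ups. First, state explicitly that $|\mathrm{Gab}(\vec\alpha,r)|=2^{nr}$; it follows from the same injectivity argument you give for the other code. Second, cut the exploratory middle paragraph (the ``or something equivalent'' part), since only your final computation is needed.

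If you want a more compact version of the same computation, note what it actually says. The vectors $\vec\alpha^{2^i}$, $i\in\mathbb{Z}/n$, satisfy $\sum_l \alpha^{2^{l+i}}\alpha^{2^{l+i'}}=\mathrm{Tr}(\alpha^{2^i}\alpha^{2^{i'}})=\delta_{i,i'}$. In other words, they are the rows of an $n\times n$ matrix $A$ over $\mathbb{F}_{2^n}$ with $AA^T=I$.
\begin{itemize}
\item The two codes are the $\mathbb{F}_{2^n}$-spans of the first $r$ rows and of the last $n-r$ rows of $A$.
\item Because both codes are $\mathbb{F}_{2^n}$-linear, their trace duals coincide with their duals for the bilinear form $\sum_l x_l y_l$. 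This uses only the nondegeneracy of $\mathrm{Tr}(xy)$.
\item This gives the dimensions for free, with no need for the root bound on linearized polynomials.
\item It also makes explicit that the codes are Euclidean-orthogonal, which your computation already shows before the outer trace is applied.
\end{itemize}
Both versions use only the self-dual normal basis property, not the oddness of $n$.
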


  Next, using $r,s$ that satisfy $r+s<n$, we define $C_X=\mathrm{Gab}(\vec{\alpha}, r), C_Z=\mathrm{Gab}(\vec{\alpha}^{2^r}, s)$.
  Since $\mathrm{Gab}(\vec{\alpha}^{2^r}, s) \subset \mathrm{Gab}(\vec{\alpha}^{2^r}, n-r)$, $C_Z \subset C_X^{\perp_{\mathrm{Tr}}}$ holds from Lemma \ref{lemma-trace}.
  Using these codes $C_X, C_Z$ over the extension field, we show that a classical code $C$ over the binary field satisfying Theorem \ref{thm:general_qrm} in Chapter \ref{sec:quantum_rm_code} can be constructed.
  
  First, for an extension field vector $\vec{\beta} =(\beta_1,\dots,\beta_n)\in \mathbb{F}_{2^n}^n$, we define a mapping $\psi: \mathbb{F}_{2^n}^n \to \mathbb{F}_2^{n^2}$ that converts it into a binary field vector using the self-dual normal basis $\{ \alpha,\alpha^2, \dots, \alpha^{2^{n-1}} \}$ as follows:
  $\psi(\vec{\beta}):=(b_{1,1}, \dots, b_{1,n}, b_{2,1}, \dots,  b_{n,n})$
  ($\beta_i = \sum_{j=1}^n b_{i,j} \alpha^{2^{j-1}}$)
  
  Since $\{ \alpha,\alpha^2, \dots, \alpha^{2^{n-1}} \}$ is a self-dual basis, for any $\vec{\beta}, \vec{\gamma} \in \mathbb{F}_{2^n}^n$, the following relationship holds between the trace inner product over the extension field and the Euclidean inner product over the binary field after expansion:
  \begin{equation}\label{eq:trace_euclid_equivalence}
    \langle \vec{\beta},\vec{\gamma} \rangle_{\mathrm{Tr}} 
    = \langle \psi(\vec{\beta}), \psi(\vec{\gamma}) \rangle_{E}.
  \end{equation}
  
  Here, we construct a linear subspace $C \subset \mathbb{F}_2^{2n^2}$ over the binary field as follows:
  \begin{equation}
    C := \{ (\psi(\vec{\beta}) \mid \psi(\vec{\gamma}))\in \mathbb{F}_2^{2n^2} \,
    |\, \vec{\beta} \in C_X, \vec{\gamma} \in C_Z \}.
  \end{equation}
  
  Next, we verify that this code $C$ satisfies the condition $C \subset C^{\perp_S}$ of Theorem \ref{thm:general_qrm}.
  For any two codewords $\vec{c}_1 = (\psi(\vec{\beta}_1) \mid \psi(\vec{\gamma}_1)), \vec{c}_2 = (\psi(\vec{\beta}_2) \mid \psi(\vec{\gamma}_2)) \in C$, their symplectic inner product is calculated as follows:
  \begin{equation}
    \langle \vec{c}_1, \vec{c}_2 \rangle_S 
    = \langle \psi(\vec{\beta}_1), \psi(\vec{\gamma}_2) \rangle_E 
      - \langle \psi(\vec{\beta}_2), \psi(\vec{\gamma}_1) \rangle_E .
  \end{equation}
  
  By construction, $\vec{\beta}_1, \vec{\beta}_2 \in C_X$ and $\vec{\gamma}_1, \vec{\gamma}_2 \in C_Z$.
  Since $C_Z \subset C_X^{\perp_{\mathrm{Tr}}}$ holds as shown earlier, $\langle \vec{\beta}_1, \vec{\gamma}_2 \rangle_{\mathrm{Tr}} = 0$ and $\langle \vec{\beta}_2, \vec{\gamma}_1 \rangle_{\mathrm{Tr}} = 0$ are satisfied.
  Therefore, $\langle \vec{c}_1, \vec{c}_2 \rangle_S = 0$, which shows that $C \subset C^{\perp_S}$.
  Regarding the dimension of the code, since $|C_X| = 2^{nr}$ and $|C_Z| = 2^{ns}$, we have $|C| = 2^{n(r+s)}$.
  In particular, when setting $s = r$, we get $|C|= 2^{2nr}$.
  
  Furthermore, we evaluate the minimum rank distance $d_R$ of the code at this time.
  The symplectic dual space $C^{\perp_S}$ for the constructed code $C$ is expressed as follows from the properties of the mapping $\psi$ and Equation (\ref{eq:trace_euclid_equivalence}):
  \begin{equation}
    C^{\perp_S} 
    = \{ (\psi(\vec{x}) \mid \psi(\vec{z})) \in \mathbb{F}_2^{2n^2} 
    \,|\, \vec{x} \in C_Z^{\perp_{\mathrm{Tr}}}, \vec{z} \in C_X^{\perp_{\mathrm{Tr}}} \}.
  \end{equation}
  From Lemma \ref{lemma-trace}, $C_X^{\perp_{\mathrm{Tr}}} = \mathrm{Gab}(\vec{\alpha}^{2^r}, n-r)$.
  Due to the properties of classical Gabidulin codes, the minimum rank distance of non-zero codewords in this space is $n - (n-r) + 1 = r+1$.
  Similarly, applying Lemma \ref{lemma-trace} gives $C_Z^{\perp_{\mathrm{Tr}}} = \mathrm{Gab}(\vec{\alpha}^{2^{2r}}, n-r)$, and here too the minimum rank distance of non-zero codewords is $n - (n-r) + 1 = r+1$.
  Since the rank of a stacked error containing both $X$ and $Z$ is bounded from below by the rank of the respective $X$ error part and $Z$ error part, the minimum rank distance is $r+1$.
  From Theorem \ref{thm:general_qrm}, a quantum rank-metric code $[[n\times n, n(n-2r), r+1]]$ is constructed.
  This construction perfectly matches the quantum Gabidulin code $\mathrm{QGab}(\vec{\alpha}, r, r)$ proposed by Delfosse and Z\'{e}mor.
\section{Construction of Quantum Rank-Metric Codes Using the Matsumoto-Uyematsu Method}\label{sec:proposed_construction}
  In this section, using the theorems prepared in the previous sections, we construct a new quantum rank-metric code using Hermitian self-orthogonality and evaluate its performance.
  Specifically, by applying the Matsumoto-Uyematsu construction method to a classical Gabidulin code that is Hermitian self-orthogonal, we convert it into a binary vector space that satisfies the condition of Theorem \ref{thm:general_qrm} and is symplectic self-orthogonal.
  \subsection{Code Construction Procedure}
    Let the base field be $\mathbb{F}_2$, and consider an extension field $\mathbb{F}_{2^{2m}}$ with an extension degree of $2m$.
    It is known that a self-dual basis of the extension field always exists for the base field $\mathbb{F}_2$ \cite{wan2003}, and let the self-dual basis of $\mathbb{F}_{2^{2m}}$ over $\mathbb{F}_2$ be $\{\alpha_1, \dots, \alpha_{2m}\}$, and set $\vec{\alpha}=(\alpha_1, \dots, \alpha_{2m})\in\mathbb{F}_{2^{2m}}^{2m}$.
    For a positive integer $k$ satisfying $k < m$, we define the classical Gabidulin code $C = \mathrm{Gab}(\vec{\alpha}, k)\subset \mathbb{F}_{2^{2m}}^{2m}$.
    The code length of this code is $2m$, and the dimension is $k$.
    Since the extension degree $2m$ is even and satisfies $k < m = (2m)/2$, this code $C$ satisfies self-orthogonality with respect to the Hermitian inner product by Lemma \ref{thm:islam_hermitian}.
    That is, $C \subset C^{\perp_H}$ holds.
    Next, we apply the Matsumoto-Uyematsu mapping $\Phi: \mathbb{F}_{2^{2m}}^{2m} \to \mathbb{F}_2^{4m^2}$ to this classical code $C$ to obtain a linear subspace $\Phi(C)$ over the binary field (one should consider $m\to 2m, n\to m, p\to 2$ in the discussion of Section \ref{sec:mu_construction}).
    From the Matsumoto-Uyematsu theorem, if $C \subset C^{\perp_H}$ holds, the obtained binary vector space satisfies symplectic self-orthogonality.
    That is, $\Phi(C) \subset \Phi(C)^{\perp_S}$ holds.
    Since the dimension of the classical code $C$ over $\mathbb{F}_{2^{2m}}$ is $k$, the number of elements when considered as a vector space over $\mathbb{F}_2$ is $|C| = (2^{2m})^k = 2^{2mk}$.
    Because the mapping $\Phi$ is injective, $|\Phi(C)| = 2^{2mk}$.
    Therefore, the subspace $\Phi(C)$ satisfies all the conditions of Theorem \ref{thm:general_qrm} and constitutes the stabilizer group of a quantum rank-metric code applicable to a stacked quantum memory of $2m$ layers and $m$ cells.
  \subsection{Evaluation of Parameters and Derivation of Minimum Rank Distance}
In the proposed method, we use a stacked quantum memory with $2m$ layers and $m$ cells. 
  By substituting the parameters $m \to 2m$ and $n \to m$ into Theorem \ref{thm:general_qrm}, the number of elements is $|\Phi(C)| = 2^{2mk}$, which satisfies the condition of the theorem. 
  Therefore, the number of logical qubits $K$ of the resulting quantum code is
  \begin{equation}
    K = m(2m - 2k) = 2m(m - k).
  \end{equation}
  
  Next, we evaluate the minimum rank distance $d_R$ of this quantum code. For this evaluation, we first prepare the following lemma.
  \begin{lemma}\label{lemma:rank_equivalence}
  For any classical vector $\vec{c} \in \mathbb{F}_{2^{2m}}^{2m}$, its rank over $\mathbb{F}_2$ as a matrix obtained by expanding it into a binary vector by the Matsumoto-Uyematsu mapping $\Phi$, and then applying the matrix mapping $M$, matches the rank of the original classical vector. That is,
  \begin{equation}
    \mathrm{rank}_{\mathbb{F}_2}(M(\Phi(\vec{c}))) = \mathrm{rank}_{\mathbb{F}_2}(\vec{c}).
  \end{equation}
  holds.
  \end{lemma}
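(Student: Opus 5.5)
The plan is to identify the rows of $M(\Phi(\vec{c}))$ and show they are the images of the coordinates $c_i$ under one fixed invertible $\mathbb{F}_2$-linear map; rank invariance then follows as in the proof of Lemma \ref{lemma:rank}.

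First, unwind the definitions. By construction $\Phi(\vec{c})=(a_{1,1},\dots,a_{2m,m},b_{1,1},\dots,b_{2m,m})$, where $\vec{e}_i=\phi^{-1}(c_i)D^{-1}=(a_{i,1},\dots,a_{i,m}\mid b_{i,1},\dots,b_{i,m})$. The mapping $M$ puts the $a$-block of layer $i$ and the $b$-block of layer $i$ side by side in row $i$. Hence the $i$-th row of the $2m\times 2m$ matrix $M(\Phi(\vec{c}))$ is exactly $\vec{e}_i=\phi^{-1}(c_i)D^{-1}$. Writing $G$ for the matrix whose $i$-th row is $\phi^{-1}(c_i)$, we therefore get
\begin{equation}
M(\Phi(\vec{c})) = G\,D^{-1}.
\end{equation}

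Second, relate $G$ to the rank of $\vec{c}$ as defined in the preliminaries. There, each $c_i$ is expanded over an arbitrary $\mathbb{F}_2$-basis of $\mathbb{F}_{2^{2m}}$ into a column vector, and the rank of the resulting matrix is taken. The map $\phi^{-1}$ is precisely the coordinate map with respect to the normal basis $\{\theta,\theta^2,\dots,\theta^{2^{2m-1}}\}$. So $G^T$ is the expansion matrix of $\vec{c}$ in that basis, and $\mathrm{rank}_{\mathbb{F}_2}(G)=\mathrm{rank}_{\mathbb{F}_2}(G^T)$.

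The one point needing care is that this rank is independent of the chosen basis. If $B$ is the change-of-basis matrix between two $\mathbb{F}_2$-bases, then the two expansion matrices differ by left multiplication by the invertible $B$, which preserves rank. I will state this explicitly, since the earlier definition only says "an arbitrary basis".

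Finally, $D$ is invertible by \cite[Lemma 10]{matsumoto2000}, so right multiplication by $D^{-1}$ preserves rank. Combining the steps,
\begin{equation}
\mathrm{rank}_{\mathbb{F}_2}(M(\Phi(\vec{c})))=\mathrm{rank}_{\mathbb{F}_2}(G)=\mathrm{rank}_{\mathbb{F}_2}(\vec{c}).
\end{equation}
The main obstacle is bookkeeping rather than substance: checking that the ordering in $\Phi$ (all $a$'s, then all $b$'s) interleaved by $M$ really produces row $i$ equal to $\vec{e}_i$.
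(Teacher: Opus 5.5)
Your proposal is correct and follows the paper's argument: both identify the $i$-th row of $M(\Phi(\vec{c}))$ as the image of $c_i$ under the invertible $\mathbb{F}_2$-linear map $x \mapsto \phi^{-1}(x)D^{-1}$, and conclude that rank is preserved. The paper states this as preservation of $\dim_{\mathbb{F}_2}\mathrm{span}_{\mathbb{F}_2}\{c_1,\dots,c_{2m}\}$, whereas you write the factorization $M(\Phi(\vec{c})) = G D^{-1}$ and explicitly check that the rank does not depend on the chosen basis, a point the span formulation leaves implicit.
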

\begin{proof}
  The classical rank $\mathrm{rank}_{\mathbb{F}_2}(\vec{c})$ of a vector $\vec{c} = (c_1, \dots, c_{2m}) \in \mathbb{F}_{2^{2m}}^{2m}$ over the extension field is defined as the rank of the matrix obtained by expanding each component into a basis as described in Section 2.3.1.
  This is equivalent to the dimension of the subspace spanned by each component $c_i \in \mathbb{F}_{2^{2m}}$ over the base field $\mathbb{F}_2$, $\dim_{\mathbb{F}_2} \mathrm{span}_{\mathbb{F}_2}\{c_1, \dots, c_{2m}\}$.
  Define $f: \mathbb{F}_{2^{2m}} \to \mathbb{F}_2^{2m}$ as $f(x) = \phi^{-1}(x)D^{-1}$.
  Since the mapping $\phi$ is an isomorphism and the matrix $D$ is invertible, $f$ is an $\mathbb{F}_2$-linear isomorphism.
  Since a linear isomorphism preserves the dimension of a vector space, the dimension of the space spanned by the transformed row vectors $f(c_i)$ is equal to the dimension of the space spanned by the components $c_i$, i.e., the rank of $\vec{c}$.
  
  From the definition of the matrix mapping $M$ and the construction of $\Phi$, the $2m \times 2m$ matrix $M(\Phi(\vec{c}))$ representing the stacked quantum error is exactly the matrix formed by arranging these $f(c_i)$ vertically as the $i$-th row vectors.
  \begin{equation}
    M(\Phi(\vec{c})) = \begin{pmatrix} f(c_1) \\ \vdots \\ f(c_{2m}) \end{pmatrix}.
  \end{equation}
  Since the rank of a matrix equals the dimension of the space spanned by its row vectors,
  \begin{equation}
    \mathrm{rank}_{\mathbb{F}_2}(M(\Phi(\vec{c}))) 
    = \dim_{\mathbb{F}_2} \mathrm{span}_{\mathbb{F}_2}\{f(c_1), \dots, f(c_{2m})\} = \mathrm{rank}_{\mathbb{F}_2}(\vec{c}).
  \end{equation}
  holds.
\end{proof}

From the definition of the minimum rank distance $d_R$ in a quantum code, the minimum rank distance is given by the following equation:
\begin{equation}
  d_R = \min_{\vec{v} \in \Phi(C)^{\perp_S} \setminus \Phi(C)} \mathrm{rank}_{\mathbb{F}_2}(M(\vec{v})).
\end{equation}
The Matsumoto-Uyematsu mapping $\Phi$ has the property of preserving the duality of the inner product, such that $\Phi(C)^{\perp_S} = \Phi(C^{\perp_H})$ holds.
Therefore, any vector $\vec{v}$ belonging to the subspace $\Phi(C)^{\perp_S} \setminus \Phi(C)$ can be expressed as $\vec{v} = \Phi(\vec{c})$ using a classical vector $\vec{c} \in C^{\perp_H} \setminus C$.

Substituting this and applying Lemma \ref{lemma:rank_equivalence}, the equation for the minimum rank distance reduces to:
\begin{equation}
  d_R = \min_{\vec{c} \in C^{\perp_H} \setminus C} \mathrm{rank}_{\mathbb{F}_2}(M(\Phi(\vec{c}))) = \min_{\vec{c} \in C^{\perp_H} \setminus C} \mathrm{rank}_{\mathbb{F}_2}(\vec{c}).
\end{equation}

Since $C = \mathrm{Gab}(\vec{\alpha}, k)$ is a Maximum Rank Distance (MRD) code, its Hermitian dual code $C^{\perp_H}$ is also an MRD code, and its minimum rank distance is $2m - (2m - k) + 1 = k + 1$ \cite{islam2023}.
Thus, the minimum rank of $\vec{c} \in C^{\perp_H} \setminus C$ is $k + 1$.
From the above, $d_R = k + 1$ is derived, and it is proven that by applying the Matsumoto-Uyematsu construction method to an Hermitian self-orthogonal classical Gabidulin code, a new quantum rank-metric code $[[2m\times m, 2m(m - k), k+1]]$ can be constructed through an approach different from the CSS construction by Delfosse and Z\'{e}mor.
  \subsection{Construction Example for $m=2$, $k=1$}
  \begin{enumerate}
    \item Let the base field be $\mathbb{F}_2$, and construct the extension field $\mathbb{F}_{2^4}$ using the root $\omega$ of the irreducible polynomial $x^4 + x + 1 = 0$.
    \item Let the self-dual basis of $\mathbb{F}_{2^4}$ over $\mathbb{F}_2$ be $\{\omega^3, \omega^7, \omega^{12}, \omega^{13}\}$, and set $\vec{\alpha}=(\omega^3, \omega^7, \omega^{12}, \omega^{13})$.
    \item Construct the classical Gabidulin code $C = \mathrm{Gab}(\vec{\alpha}, 1) \subset \mathbb{F}_{2^4}^4$. From Lemma \ref{thm:islam_hermitian}, this code satisfies Hermitian self-orthogonality $C \subset C^{\perp_H}$.
    \item To apply the Matsumoto-Uyematsu mapping, set the normal basis of $\mathbb{F}_{2^4}$ to $\{\omega^3, \omega^6, \omega^{12}, \omega^9\}$ generated by $\theta = \omega^3$.
    \item Calculate the representation matrix $T$ of the mapping $T(x,y)$ based on this normal basis, to obtain
    $T=
          \begin{pmatrix}
          0 & 1 & 0 & 0 \\
          1 & 0 & 0 & 1 \\
          0 & 0 & 0 & 1 \\
          0 & 1 & 1 & 0
          \end{pmatrix}
    $.
    \item Calculate the transformation matrix $D$ satisfying $DTD^T = S$, obtaining
    $
      D = 
          \begin{pmatrix}
          1 & 0 & 0 & 0 \\ 
          0 & 0 & 1 & 0 \\ 
          0 & 1 & 0 & 0 \\ 
          1 & 0 & 0 & 1 
          \end{pmatrix}
    $.
    \item For each component of the code $C$, perform expansion using the normal basis and multiply by the matrix $D^{-1}$ to derive a linear subspace $\Phi(C)$ over the binary field.
    \item $\Phi(C)$ is a 4-dimensional subspace over $\mathbb{F}_2^{16}$, and the four independent generators of the corresponding stabilizer group $\mathcal{S} = \langle P_1, P_2, P_3, P_4 \rangle$ are obtained as follows:
    \begin{itemize}
      \item $P_1 = (X \otimes I) \otimes (Y \otimes X) \otimes (I \otimes X) \otimes (I \otimes Y)$
      \item $P_2 = (Z \otimes X) \otimes (X \otimes Y) \otimes (I \otimes Y) \otimes (Y \otimes Y)$
      \item $P_3 = (Y \otimes Z) \otimes (X \otimes Z) \otimes (Y \otimes Y) \otimes (Z \otimes Y)$
      \item $P_4 = (Z \otimes I) \otimes (X \otimes X) \otimes (Z \otimes Y) \otimes (I \otimes Z)$.
    \end{itemize}
  
  \item For the above stabilizer group $\mathcal{S}$, define the code space as the simultaneous eigenspace of $\mathcal{S}$.
  \end{enumerate}
      As a result, a $[[4 \times 2, 4, 2]]$ quantum rank-metric code encoding 4 logical qubits using 8 physical qubits (4 layers × 2 cells) with a minimum rank distance of 2 is constructed.
\section{Comparison with Conventional Methods}\label{sec:comparison}
    We compare the parameters and construction conditions of our method with those of the quantum Gabidulin codes based on the CSS construction proposed by Delfosse and Z\'{e}mor. \cite{qgab}.
  In the comparison, we define the code rate $R = K/N$ as the ratio of the total number of logical qubits $K$ to the total number of physical qubits $N$, and the relative rank distance $\delta= D/N$ as the ratio of the minimum rank distance to the total number of physical qubits $N$.
  Since the conventional method is limited to square-shaped memories of odd length, we compare the layout of $2n$ layers in our method with the conventional method's $2n-1$ and $2n+1$ layers, which are the preceding and succeeding layers respectively.
  For arbitrary positive integers $n, k$ ($n > k$), the parameters shown in Table \ref{tab:comparison_transposed} are obtained.
  \begin{table}[htbp]
    \centering
    \caption{Parameter Comparison Between the Conventional Method and Our Method}
    \label{tab:comparison_transposed}
    \resizebox{\textwidth}{!}{
    \begin{tabular}{llll}
        \toprule
        Parameter & Conventional Method \cite{qgab} & Conventional Method \cite{qgab} & Proposed Method \\ \midrule
        \begin{tabular}[c]{@{}l@{}}Physical Qubits $N$\\ ($=\text{layers}\times\text{cells}$)\end{tabular} & $(2n-1) \times (2n-1)$& $(2n+1) \times (2n+1)$ & $2n \times n$ \\ \addlinespace
        \begin{tabular}[c]{@{}l@{}}Logical Qubits $K$\\ ($=\text{layers}\times\text{cells}$)\end{tabular} & $(2n-1) \times (2n-2k-1)$& $(2n+1) \times (2n-2k+1)$ & $2n \times (n-k)$ \\ \addlinespace
        Minimum Rank Distance $D$ & $k+1$& $k+1$ & $k+1$ \\ \addlinespace
        Code Rate\\ $R=K/N$ & $1-\frac{2k}{2n-1}$& $1-\frac{2k}{2n+1}$ & $1-\frac{k}{n}$ \\ \addlinespace
        Relative Rank Distance\\ $\delta= D/N$ & $\frac{k+1}{(2n-1)^2}$& $\frac{k+1}{(2n+1)^2}$ & $\frac{k+1}{2n^2}$ \\ \bottomrule
    \end{tabular}
    }
\end{table}

  This construction demonstrates clear advantages over the conventional method in the following points.
  \begin{enumerate}
    \item Improvement in relative error correction capability\\
      From Table 1, we compare the proposed method and the conventional method by aligning their number of layers.
      Specifically, for the $2n$ layers of the proposed method, we consider cases of $2n-1$ and $2n+1$ layers in the conventional method.
      First, when the minimum rank distance $D = k+1$ is kept constant, it can be seen that the code rate $R = 1 - k/n$ of the proposed method maintains an almost equivalent value to $R = 1 - 2k/(2n \pm 1)$ of the conventional method.
      However, focusing on the relative rank distance $\delta = D/N$, $\delta$ of the proposed method achieves about twice the value of the conventional method.
      This means that when using the same number of physical qubits, the proposed method can achieve a higher capability to correct stacked errors.
      That is, based on Theorem \ref{thm:faulty_circuit}, it is possible to create a quantum code capable of handling a larger number of faulty Clifford gates.
    \item Applicability to even lengths\\
      The conventional method was restricted to square-shaped memories ($n \times n$) where the number of cells per layer $n$ is an odd number, due to the existence conditions of self-dual normal bases.
      In contrast, the proposed method eliminates this algebraic constraint by utilizing a self-dual basis.
      As a result, any positive number can be selected for the overall width $n$ of the stacked quantum memory.
  \end{enumerate}
  From the above results, the proposed method improves the error correction capability relative to the message while maintaining the code rate.
  At the same time, it expands the degree of freedom in designing stacked quantum memories, contributing to the construction of practical error correction.

\section{Conclusion}\label{sec:conclusion}
  In this paper, we proposed quantum rank-metric codes using Hermitian orthogonality as a new method to efficiently correct circuit faults in stacked quantum memories.

  First, as the foundation of this study, we formulated a general theoretical framework for constructing stabilizer groups with symplectic self-orthogonality from classical linear codes and extending them to quantum rank-metric codes.
  This enabled the construction of quantum rank-metric codes not restricted by conventional CSS constructions.
  In addition, to guarantee the effectiveness of quantum rank-metric codes, we reconsidered the definition of the rank of a stacked error, and strictly proved that the rank of the stacked error accumulated across the entire circuit is bounded from above by the number of faulty Clifford gates.
  
  Next, we examined the issue in the conventional quantum Gabidulin codes proposed by Delfosse and Z\'{e}mor, where the shape of applicable stacked quantum memories is restricted to square shapes of odd length due to the existence conditions of self-dual normal bases \cite{qgab}.
  To address this, in this study, we overcame this constraint by combining Hermitian self-orthogonal Gabidulin codes using self-dual bases that exist when the extension degree is even \cite{islam2023}, with the quantum code construction method utilizing the Hermitian inner product by Matsumoto and Uyematsu \cite{matsumoto2000}.
  
  With the proposed method, we succeeded in approximately doubling the minimum rank distance relative to the number of physical qubits while maintaining roughly the same code rate.
  As a result, for an equivalent number of physical qubits, it became possible to construct quantum codes capable of handling approximately twice as many faulty Clifford gates while preserving the code rate.
  Furthermore, it enabled stacked memory layouts where the number of layers and cells are even numbers.

\section*{Acknowledgment}
In accordance with IEEE guidelines on the use of AI-generated content, the authors disclose that an AI tool (Gemini) was utilized to assist with the English translation and content proofreading of this manuscript. Additionally, the conceptual diagram in Figure \ref{fig:stacked_memory} was generated using Gemini's image generation capabilities. The authors have carefully reviewed all AI-assisted content and take full responsibility for the final manuscript.

\end{document}